\documentclass[11pt]{article}
\usepackage[utf8]{inputenc}
\usepackage[top=1in, bottom=1in, left=1.25in, right=1.25in]{geometry}
\usepackage{amsmath}
\usepackage{amssymb}
\usepackage{amsthm}
\usepackage{thmtools}
\usepackage{tikz}
\usetikzlibrary{arrows.meta}

\usepackage{enumitem}
\usepackage{xcolor}
\usepackage{hyperref}
\usepackage[langfont=typewriter]{complexity}
\usepackage[capitalize]{cleveref} 
\usepackage{microtype}
\usepackage{authblk}

\declaretheorem[qed={$\lrcorner$}, numberwithin=section]{theorem}
\declaretheorem[qed={$\lrcorner$}, sibling=theorem]{proposition}

\theoremstyle{definition}
\declaretheorem[qed={$\lrcorner$}, sibling=theorem]{definition}

\DeclareMathOperator{\dom}{dom}

\usepackage{caption}

\DeclareCaptionFormat{myformat}{#1#2#3\hrulefill}
\captionsetup[figure]{format=myformat}
\captionsetup[table]{format=myformat}


\title{E-Graphs as Circuits, and Optimal Extraction via Treewidth\footnote{This material is based upon work supported by the National Science Foundation under Grant Nos. CCF-1836724, CCF-2006359, CNS-2232339, and CCF-2312195. Any opinions, findings, and conclusions or recommendations expressed in this material are those of the authors and do not necessarily reflect the views of the National Science Foundation.}}

\author{Glenn Sun}
\author{Yihong Zhang}
\author{Haobin Ni}
\affil{University of Washington}
\date{}

\begin{document}

\maketitle

\begin{abstract}
    We demonstrate a new connection between e-graphs and Boolean circuits. This allows us to adapt existing literature on circuits to easily arrive at an algorithm for optimal e-graph extraction, parameterized by treewidth, which runs in $2^{O(w^2)}\text{poly}(w, n)$ time, where $w$ is the treewidth of the e-graph. Additionally, we show how the circuit view of e-graphs allows us to apply powerful simplification techniques, and we analyze a dataset of e-graphs to show that these techniques can reduce e-graph size and treewidth by 40-80\% in many cases. While the core parameterized algorithm may be adapted to work directly on e-graphs, the primary value of the circuit view is in allowing the transfer of ideas from the well-established field of circuits to e-graphs.
\end{abstract}

\section{Introduction}\label{introduction}

E-graphs are a type of directed graph with an equivalence relation on its nodes that can be used to compactly represent exponentially many equivalent expressions. In recent years, this capability has found e-graphs to have many applications in formal methods, compilers, and automated reasoning communities (\cite{jnr02}, \cite{tstl09}, \cite{stl11}, \cite{w+21}).

One important problem with e-graphs is extraction: from the compact representation, how does one pick a minimum cost expression? E-graph extraction is known to be NP-hard \cite{ste11}, so applications of e-graphs often use suboptimal techniques like greedy algorithms \cite{w+21} to extract one expression out of an e-graph. If one is interested in exact optimal extraction, integer linear programming (ILP) is sometimes used \cite{y+21}, but there has been little research into specialized algorithms to solve extraction optimally.

Common algorithmic techniques for NP-hard optimization problems include approximation algorithms and parameterized algorithms. It turns out that extraction is also hard to approximate to any constant factor \cite{glp24}, so it is natural to turn to parameterized algorithms: algorithms that are efficient after a particular parameter is chosen to be fixed. One commonly used parameter is \emph{treewidth}, a measure of how ``close'' to a tree the graph is (a survey is available in \cite{bod06}).

We make three key observations:
\begin{enumerate}
    \item E-graphs that appear in practice often have low treewidth, so it is a good choice for parameterization. We quantify this in \cref{simplification-evaluation}. 
    \item E-graphs may be considered a certain class of \emph{monotone circuits}, meaning a Boolean circuit with only AND and OR gates (no NOT gates). This allows us to draw on existing literature about treewidth-based algorithms for circuits in order to solve extraction. 
    \item The circuit view also allows us to apply circuit simplification, which is more broadly studied and flexible than simplifying e-graphs directly.
\end{enumerate}

More specifically, we will show in \cref{circuits-and-e-graphs} that with the appropriate translation between e-graphs and circuits, the extraction problem is nothing more than the weighted monotone circuit satisfiability problem, with the one caveat that our circuits may have cycles. In \cref{the-main-dynamic-programming-algorithm}, we then draw on an existing algorithm for weighted \emph{acyclic} monotone circuit satisfiability given by Kanj, Thilikos, and Xia \cite{ktx17}. Their algorithm parameterizes on treewidth as desired, and we make only one minor change to take care of cycles. In \cref{circuit-simplification}, we discuss simplification rules that make the main algorithm more practical.

Simultaneously and independently of our own efforts, Goharshady, Lam, and Parreaux \cite{glp24} also gave a solution to the e-graph extraction problem, also through parameterization by treewidth. They do not use circuits, but our main algorithms are actually very similar. They include a few additional extensions and optimizations as well. 

Our main contribution is the connection between circuits and e-graphs, which not only allows us to use existing circuit algorithms but also employ simplification techniques more effectively. Given that our main algorithm is extremely similar to the one given in \cite{glp24}, we will omit to give our own practical implementation, evaluation against existing extraction methods, and proofs of correctness for the main algorithm, and direct the interested reader to their paper. While we do have a publicly available implementation\footnote{\url{https://github.com/glenn-sun/egg-extraction-gym/tree/glenn-treewidth/src/extract/treewidth}}, our focus is on the circuit connection and simplification.

\begin{figure}
    \centering
    \begin{tikzpicture}
        \draw (-2, 5) -- (-1.5, 6.5) -- (2, 6.5) -- (1.5, 5) -- (-2, 5);
        \node[align=center] at (0, 5.75) {Input e-graph};

        \draw[->, -{Stealth[scale=1.5]}] (0, 5) -- (0, 4);
        
        \draw (-2, 2.5) rectangle (2, 4);
        \node[align=center] at (0, 3.25) {Convert to circuit\\\emph{\cref{circuits-and-e-graphs}}};

        \draw[->, -{Stealth[scale=1.5]}] (0, 2.5) -- (0, 1.5);
        
        \draw (-2, 0) rectangle (2, 1.5);
        \node[align=center] at (0, 0.75) {Circuit simplification\\\emph{\cref{circuit-simplification}}};

        \draw[->, -{Stealth[scale=1.5]}] (0, 0) -- (0, -0.5) -- (4, -0.5) -- (4, 7) -- (8, 7) -- (8, 6.5);

        \draw (6, 5) rectangle (10, 6.5);
        \node[align=center] at (8, 5.75) {Tree decomposition\\\emph{external libraries}};

        \draw[->, -{Stealth[scale=1.5]}] (8, 5) -- (8, 4);
        
        \draw (6, 2.5) rectangle (10, 4);
        \node[align=center] at (8, 3.25) {Main algorithm\\\emph{\cref{the-main-dynamic-programming-algorithm}}};

        \draw[->, -{Stealth[scale=1.5]}] (8, 2.5) -- (8, 1.5);

        \draw (6, 0) -- (6.5, 1.5) -- (10, 1.5) -- (9.5, 0) -- (6, 0);
        \node[align=center] at (8, 0.75) {Output extraction};
    \end{tikzpicture}
    \caption{Overall algorithm pipeline and article organization}
    \label{fig:pipeline}
\end{figure}
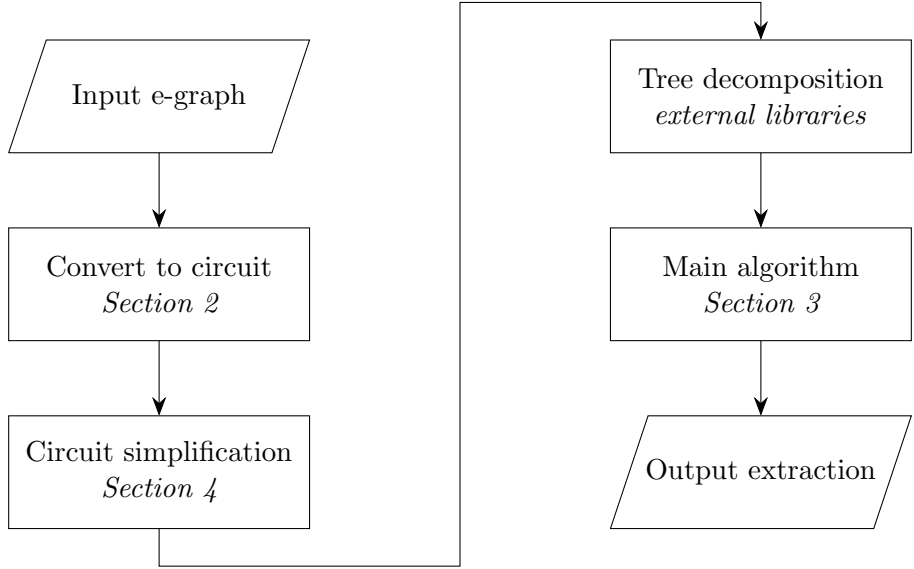















\section{Circuits and e-graphs}\label{circuits-and-e-graphs}

We first show the equivalence of circuits and e-graphs. For the reader who is already familiar with e-graphs, you may find \cref{fig:equiv} sufficient to give the intuition of the equivalence.

\begin{definition}
    A \emph{weighted cyclic monotone circuit} (henceforth ``circuit'') is a directed graph with additional information $G = (V, E, V_{out}, g, c)$. The set $V_{in} \subseteq V$ is the set of \emph{inputs}, defined to be the set of vertices with in-degree 0. The set $V_{out} \subseteq V$ is the set of \emph{outputs}. (These are the nodes whose values we are interested in, which may or may not have out-degree 0.) Finally, $g : V \setminus V_{in} \to \{\text{AND}, \text{OR}\}$ is the gate type function and $c : V_{in} \to \mathbb{R}$ is a cost function. For any $U \subseteq V_{in}$, we denote $c(U) = \sum_{u \in U} c(u)$.
\end{definition}

Because our circuits have cycles, we need to be a bit more precise than usual about the semantics of the circuit. In particular, there may be undefined behavior on certain inputs.

\begin{definition}
    A function $\alpha : V \to \{0, 1\}$ is a (valid total) \emph{evaluation} of $G$ if for all $u \in V$, denoting the inputs to $u$ as $\{v_1, \dots, v_k\}$, we have $\alpha(u) = g(u)(\alpha(v_1), \dots, \alpha(v_k))$. The evaluation \emph{satisfies $G$} if $\alpha(u) = 1$ for all $u \in V_{out}$. It \emph{minimally satisfies $G$} if for every proper subset $A \subsetneq \{u \in V : \alpha(u) = 1\}$, the function \[
        \alpha|_{A}(u) = \begin{cases} \alpha(u) & \text{if $u \in A$} \\  0 & \text{if $u \notin A$} \end{cases}
    \] 
    is not a valid evaluation satisfying $G$. We denote $G[\alpha]$ to be the subgraph of $G$ induced by the set $\{u \in V : \alpha(u) = 1\}$. We say that $\alpha$ is \emph{acyclic} if $G[\alpha]$ is acyclic.
\end{definition}

The key fact that allows us to have well-defined semantics with cyclic circuits is the following:

\begin{proposition}
    An acyclic evaluation $\alpha$ is uniquely determined by its value on the inputs.
\end{proposition}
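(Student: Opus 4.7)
The plan is to argue by contradiction. Suppose $\alpha$ and $\beta$ are two distinct acyclic evaluations of $G$ that agree on $V_{in}$, and define $D = \{u \in V : \alpha(u) = 1 \text{ and } \beta(u) = 0\}$. By symmetry (swap the roles of $\alpha$ and $\beta$ if necessary), we may assume $D$ is nonempty. Since $\alpha$ and $\beta$ agree on inputs, $D \cap V_{in} = \emptyset$, so every $u \in D$ has at least one predecessor in $G$. The overall strategy is to contradict the existence of $D$ by showing that $G[D]$ is both (i) a subgraph of an acyclic graph, and (ii) has no source.

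The key local step is (ii): I would show that every $u \in D$ has at least one predecessor also lying in $D$. Let $\{v_1, \dots, v_k\}$ be the predecessors of $u$. The validity conditions give $\alpha(u) = g(u)(\alpha(v_1), \dots, \alpha(v_k))$ and $\beta(u) = g(u)(\beta(v_1), \dots, \beta(v_k))$, and we know $\alpha(u) = 1 \neq 0 = \beta(u)$. A short case analysis on $g(u)$ finishes the step: if $g(u) = \text{AND}$, then $\alpha(u) = 1$ forces all $\alpha(v_i) = 1$, while $\beta(u) = 0$ forces some $\beta(v_i) = 0$, so that $v_i$ is in $D$; if $g(u) = \text{OR}$, then some $\alpha(v_i) = 1$ while all $\beta(v_i) = 0$, so that $v_i$ is in $D$.

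For (i), note that $D \subseteq \{u : \alpha(u) = 1\}$ by definition, so $G[D]$ is a subgraph of $G[\alpha]$, which is acyclic by the hypothesis that $\alpha$ is an acyclic evaluation. But a nonempty finite acyclic digraph must contain a vertex of in-degree zero, directly contradicting (ii). Hence $D = \emptyset$, and by the same argument the reverse difference is empty, so $\alpha = \beta$.

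I do not anticipate a real obstacle. The only substantive content is the four-way case analysis for the local step, which is essentially forced by the monotone gate semantics, and the acyclicity hypothesis enters at exactly one place, namely to guarantee that $G[D]$ inherits a source from $G[\alpha]$. The argument also clarifies \emph{why} acyclicity is the right condition: without it, one could have a cycle of OR-gates all evaluating to $1$ with no justification from the inputs, which is precisely the obstruction the proof rules out.
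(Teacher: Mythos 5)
Your proof is correct and is essentially the paper's argument in contrapositive form: the paper inducts over $G[\alpha]$ in topological order to propagate agreement from the inputs, while you show the disagreement set $D \subseteq G[\alpha]$ would have no source, which is the same use of acyclicity and the same gate-by-gate case analysis.
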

\begin{proof}
     Let $\beta$ be an acyclic evaluation which agrees with $\alpha$ on $V_{in}$, we will show that $\alpha(u) = 1$ iff $\beta(u) = 1$. The forward and backward directions are identical. Let us treat the forward direction.

    Recall that $\alpha(u) = 1$ iff $u \in G[\alpha]$. Because $G[\alpha]$ is acyclic, take the vertices in topological order. The base cases are the elements of $G[\alpha]$ with in-degree 0; note that these must have in-degree 0 in $G$ because gates cannot be 1 without at least one 1 input. Hence $\beta$ agrees with $\alpha$ here by hypothesis.
    
    For the inductive step, to show that $\beta(u) = 1$, take cases based on the gate type of $u$. If $u$ is an AND gate, because $\alpha(u) = 1$, all of $u$'s inputs in $G$ belonged to $G[\alpha]$, so $\beta$ is 1 there by induction, and the only valid choice for $\beta(u)$ is 1. If $u$ is an OR gate, a similar argument applies. 
\end{proof}

Next, let us draw the connection between e-graphs and circuits.

\begin{definition}
    An \emph{e-graph} is a structure $\mathcal{G} = (N, \mathcal{C}, \mathcal{E}, \mathcal{C}_{out}, c)$, where $N$ is a set of \emph{e-nodes}, $\mathcal{C}$ is a partition of $N$ into \emph{e-classes}, $\mathcal{E} \subseteq  N \times \mathcal{C}$ is a directed edge relation, $\mathcal{C}_{out} \subseteq \mathcal{C}$ is the set of \emph{output classes}, and $c : N \to \mathbb{R}$ is a cost function. For any $M \subseteq N$, we denote $c(M) = \sum_{u \in M} c(u)$.
\end{definition}
\begin{definition}
    Let $\dom(\varphi) \subseteq \mathcal{C}$. An \emph{extraction} of an e-graph is a choice function $\varphi : \dom(\varphi) \to N$ (that is, $\varphi(C) \in C$ for all $C \in \dom(\varphi)$) which additionally has that whenever $C \in \dom(\varphi)$, the inputs to $\varphi(C)$ are all in $\dom(\varphi)$ as well. The extraction is \emph{satisfying} if $C \in \dom(\varphi)$ for all $C \in \mathcal{C}_{out}$. It is \emph{minimally satisfying} if for every proper subset $\mathcal{A} \subsetneq \dom(\varphi)$, the function $\varphi|_{\mathcal{A}}$ is not a satisfying extraction. A \emph{selected path} in $\varphi$ is a finite list of e-classes $C_1, \dots, C_k$ such that for all $1 \le i \le k-1$, we have $(\varphi(C_{i}), C_{i+1}) \in \mathcal{E}$. The extraction is \emph{acyclic} if there are no selected paths from a class to itself.
\end{definition}

Our main observation is that every e-graph can be represented as a circuit in such a way that its semantics are equivalent. For an example, see \cref{fig:equiv}.

\begin{figure}
    \centering
    \includegraphics[scale=0.8667]{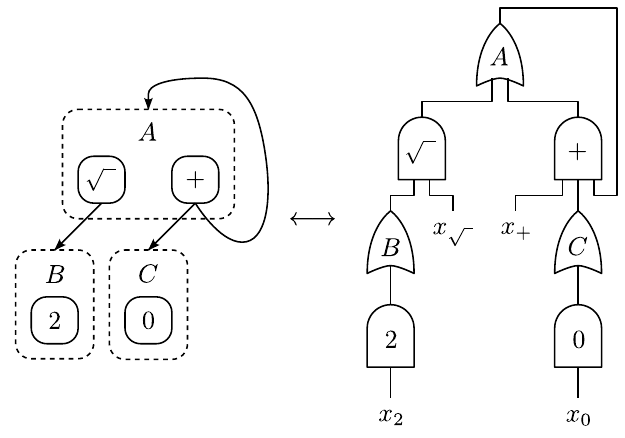}
    \caption{An example of converting e-graphs into circuits. Note that the arrows conventionally point to dependencies in e-graphs, but signals flow in the opposite direction in a circuit, so we flip the arrows. Furthermore, the extraction $A \mapsto \sqrt{\phantom{x}}$, $B \mapsto 2$ corresponds to the evaluation where everything on the left and the OR gate $A$ are all 1, and the rest are 0. The cyclic extraction $A \mapsto +$, $C \mapsto 0$ corresponds to the cyclic evaluation where everything on the right and the OR gate $A$ are all 1, and the rest are 0. The evaluation where everything is 1 has no corresponding extraction because $A$ can only choose one e-node in an e-graph; however, such an evaluation is not minimal.}
    \label{fig:equiv}
\end{figure}


\begin{proposition} \label{prop:equiv}
     Given an e-graph $\mathcal{G} = (N, \mathcal{C}, \mathcal{E}, \mathcal{C}_{out}, c)$, construct a monotone circuit $G = (V, E, V_{out}, g, c)$ by converting every e-class into an OR gate, every e-node into an AND gate, and then flip every edge. Additionally, create one input for every e-node, and attach it to its corresponding AND gate, with cost set equal to the cost of the e-node.
     
     Then, there exists an acyclicity-preserving bijection between minimal satisfying extractions of $\mathcal{G}$ and minimal satisfying evaluations of $G$.
\end{proposition}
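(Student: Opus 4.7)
The plan is to construct explicit maps in both directions and verify they are mutually inverse, preserving satisfaction, minimality, and acyclicity.

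For the forward map, given a minimal satisfying extraction $\varphi$, I would define $\alpha : V \to \{0,1\}$ by setting, for each OR gate $C$, $\alpha(C) = 1$ iff $C \in \dom(\varphi)$; for each AND gate $n$ (with containing class $C$), $\alpha(n) = 1$ iff $n = \varphi(C)$; and for each fresh input, $\alpha(i_n) = \alpha(n)$. Validity as an evaluation follows from the extraction condition: if $\alpha(n) = 1$ then $n = \varphi(C)$, so every dependency class of $n$ lies in $\dom(\varphi)$, making all OR inputs of $n$ equal to 1, and the fresh input is 1 by construction; conversely if $\alpha(n) = 0$ then $\alpha(i_n) = 0$, so the AND gate evaluates to 0. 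The OR gate semantics are similarly immediate. Satisfaction transfers because $V_{out}$ corresponds to $\mathcal{C}_{out}$.

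For the backward map, given a minimal satisfying evaluation $\alpha$, I would first prove that for every OR gate $C$ with $\alpha(C) = 1$ there is exactly one AND gate $n \in C$ with $\alpha(n) = 1$. The ``at least one'' direction is immediate from OR semantics; for ``at most one,'' if two AND gates $n_1, n_2 \in C$ were both 1, I would construct a strictly smaller valid satisfying subset $A \subsetneq \{u : \alpha(u) = 1\}$ by deleting $n_2$ and its fresh input $i_{n_2}$, and cascading the resulting 0 upward through any AND gates whose remaining inputs then fail to all be 1. The crucial point is that this cascade cannot reach any output OR gate, since at each OR encountered along the cascade either another AND input is still 1 or we can continue the cascade without damaging satisfaction. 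I would then set $\varphi(C)$ to the unique such $n$, and verify $\varphi$ is a valid extraction: the AND gate semantics force every OR input of $\varphi(C)$ to be 1, hence every dependency class is in $\dom(\varphi)$.

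The two maps invert each other by construction. Minimality corresponds because removing a class from $\dom(\varphi)$ is mirrored by removing the OR gate, its selected AND gate, and its fresh input (with any resulting cascade) from $\alpha$'s 1-set, and vice versa. Acyclicity is preserved because a selected path $C_1, \dots, C_k$ in $\varphi$ is precisely the alternating directed path $C_1 \to \varphi(C_1) \to C_2 \to \varphi(C_2) \to \cdots \to C_k$ in $G[\alpha]$, so selected cycles correspond exactly to cycles in $G[\alpha]$. The main obstacle is the ``at most one selected AND per OR'' step of the backward map: a careful cascading argument is needed, and one must verify the cascade never propagates back to an output OR gate. The analogous cascade also appears in the forward direction when checking that $\alpha$ built from a minimal $\varphi$ is itself minimal.
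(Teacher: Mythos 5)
Your overall route is the same as the paper's: define $\alpha$ from $\varphi$ exactly as you do, invert by selecting the unique true AND gate in each true OR class, and check validity, satisfaction, minimality, and acyclicity by hand. However, your ``cascade'' argument for the at-most-one claim is both unnecessary and, as justified, circular. You assert the cascade of newly falsified gates ``cannot reach any output OR gate'' because at each OR either another AND input is still 1 \emph{or} ``we can continue the cascade without damaging satisfaction''---but a cascade that reaches an output OR gate and zeroes all of its AND inputs does damage satisfaction, and you give no reason this cannot happen. The fix is structural: in the translated circuit each AND gate $\land_u$ has exactly one out-neighbor, namely the OR gate $\lor_C$ of its own class, so removing $\land_{n_2}$ and its input $x_{n_2}$ from the true set affects only $\lor_C$, which retains the true input $\land_{n_1}$. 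The cascade is empty, and $A = V \setminus \{\land_{n_2}, x_{n_2}\}$ already gives a valid satisfying restriction contradicting minimality of $\alpha$; this is exactly the paper's one-line argument.

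A second soft spot is minimality. The definition quantifies over \emph{all} proper subsets of the true set (resp.\ of $\dom(\varphi)$), not over single-element removals, so ``removing a class is mirrored by removing the OR gate, its selected AND gate, and its fresh input'' does not establish the correspondence. The paper instead takes an arbitrary proper subset $A \subsetneq \{u : \alpha(u) = 1\}$ with $\alpha|_A$ satisfying and produces $\mathcal{A} = \{C : \alpha|_A(\lor_C) = 1\} \subsetneq \dom(\varphi)$ with $\varphi|_{\mathcal{A}}$ satisfying; the real work is showing $\mathcal{A}$ is \emph{proper}, which needs a case analysis on whether the vertex missing from $A$ is an $x_u$, a $\land_u$, or a $\lor_C$ (and a symmetric construction in the other direction). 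Your acyclicity argument is correct in substance, though the circuit edges run opposite to the direction you wrote (the directed path is $\lor_{C_k} \to \land_{\varphi(C_{k-1})} \to \cdots \to \lor_{C_1}$, since edges are flipped); this does not affect the correspondence of cycles.
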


The proof is a tedious checking of these definitions that does not require any external results. The details are contained in \cref{sec:appendix}. With this observation, in order to solve the extraction problem for e-graphs, it suffices to solve the weighted satisfiability problem for (potentially cyclic) circuits.

\section{The main dynamic programming algorithm}\label{the-main-dynamic-programming-algorithm}

\subsection{Preliminaries}\label{preliminaries}

Our main inspiration is Proposition 4.7 of \cite{ktx17}, which solved the weighted minimum satisfiability problem for acyclic monotone circuits by parameterizing on treewidth. Based on the reduction illustrated in the previous section, the only additional algorithmic contribution that we need to make is to describe why the cyclic nature of the graph is not a problem and how we enforce the extraction result to be acyclic. For completeness, we will recap the full algorithm (with slightly different notations from the original paper).

The main technique is called treewidth, or tree decomposition. This is a classical technique; see Chapter 7 of \cite{c+15} for more information. 

\begin{definition}
    Given a undirected graph $G = (V, E)$, a tree decomposition of $G$ is a tree $\mathcal{T} = (\mathcal{X}, \mathcal{E})$, whose vertices are subsets of $V$ (called \emph{bags}), satisfying:

\begin{enumerate}
\def\labelenumi{\arabic{enumi}.}
\item
  For all $\{u, v\} \in E$, there exists a bag $X \in \mathcal{X}$ such that $u, v \in X$.
\item
  For all $v \in V$, the subgraph of $\mathcal{T}$ induced by the bags that contain $v$ forms a tree.
\end{enumerate}

The \emph{width} of a tree decomposition is the size of the largest bag minus one. The \emph{treewidth} of a graph is the smallest width of any tree decomposition. If $\mathcal{T}$ is rooted, we write $T_X$ for the union of all bags underneath $X \in \mathcal{X}$ (including $X$). A \emph{nice tree decomposition} is one in which every bag $X$ is one of 4 kinds:

\begin{enumerate}
\def\labelenumi{\arabic{enumi}.}

\item
  Leaf bag: $X = \emptyset$.
\item
  Insert bag: $X = Y \cup \{u\}$, where $Y$ is the unique child of $X$ and $u \in V \setminus Y$.
\item
  Forget bag: $X = Y \setminus \{u\}$, where $Y$ is the unique child of $X$ and $u \in Y$.
\item
  Join bag: $X$ has two children, which contain exactly the same vertices as $X$. \qedhere
\end{enumerate}
\end{definition}

\begin{figure}
    \centering
    \includegraphics[scale=0.8667]{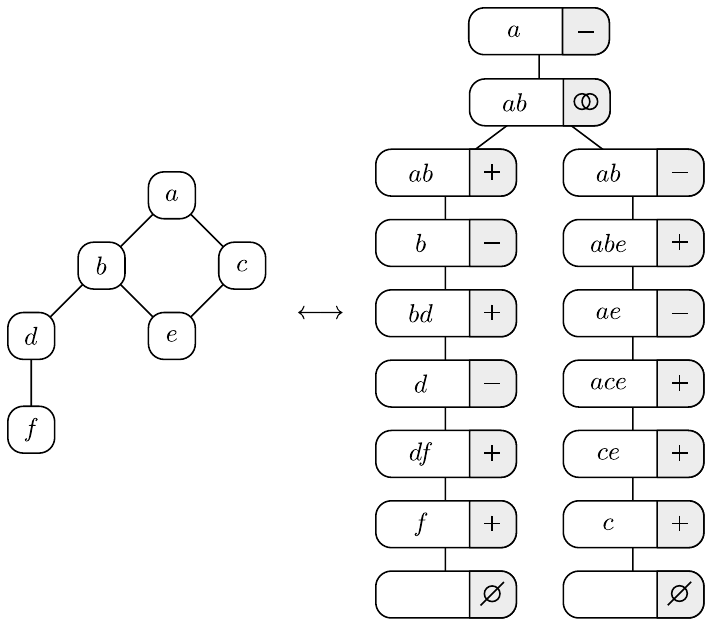}
    \caption{A nice tree decomposition of a graph, where each node is annotated whether it is a leaf, insert, forget, or join node. This graph has treewidth 2.}
    \label{fig:td}
\end{figure}

A nice tree decomposition can be computed from a tree decomposition in linear time: pick any bag to serve as the root, then for all children of a bag, first forget and insert the difference between the child and the current bag, then join all of the copies of the current bag, and repeat. 

The core feature of tree decomposition is that as you walk up the tree, the current bag cuts the original graph into two disconnected pieces: the set of vertices that you have already seen (including those that are forgotten) and the set of vertices that you have not seen yet. This property allows us to do dynamic programming and focus only on the current bag.

One last definition that we will use in the algorithm is that of a partial evaluation since we would like to build up our evaluations at each bag incrementally. We will use the following definition with $U = T_X$.

\begin{definition}
    A (locally valid) partial evaluation of a circuit is a function $\alpha : U \to \{0, 1\}$ where $U \subseteq V$, and for all $u \in U$, denoting the inputs to $u$ as $\{v_1, \dots, v_k, w_1, \dots, w_{\ell}\}$ where each $v_i \in U$ and each $w_j \not \in U$, there exist $b_1, \dots, b_\ell \in \{0, 1\}$ such that $\alpha(u)$ is $g(u)$ (AND or OR) applied to $(\alpha(v_1), \dots, \alpha(v_k), b_1, \dots, b_{\ell})$.
\end{definition}

\subsection{The algorithm}\label{the-algorithm}

We are given a weighted cyclic monotone circuit $G = (V, E, V_{out}, g, c)$, and are tasked to compute a minimum cost satisfying evaluation. The algorithm sketch is as follows:

\begin{enumerate}
\def\labelenumi{\arabic{enumi}.}
\item
  Add a new AND gate to $G$, with every vertex in $V_{out}$ as inputs, and call it $u_{out}$.
\item
  Compute a tree decomposition of the undirected underlying graph of $G$. 
\item
  Compute a nice tree decomposition rooted at any bag containing $u_{out}$.
\item
  We do dynamic programming. At every bag, for every possible \emph{summary} of a partial evaluation, the program will remember the minimum cost partial evaluation producing that summary. The summary of a partial evaluation $\alpha : T_X \to \{0, 1\}$ at bag $X$ has three parts:
  \begin{itemize}
      \item First, there is the restriction $\alpha|_X$.
      \item Second, there is the map $\text{known}_\alpha|_X$, where $\text{known}_\alpha(u) = 1$ if and only if $\alpha(u) = 1$ and that this fact can be deduced from the value of $\alpha$ on the inputs to $u$. For example, if $u$ is an AND gate and one of its inputs is not in $\dom(\alpha)$, then $\text{known}_\alpha(u) = 0$.
      \item Lastly, there is $G[\alpha]^+[X]$, the subgraph of the transitive closure of $G[\alpha]$ induced by $X$. In other words, which elements in $X$ have paths between them in $G[\alpha]$? This is what allows us to correctly handle acyclicity, remembering which vertices in the bag are connected, even when the connections themselves have already been forgotten from the bag, and is the only change from \cite{ktx17}.
  \end{itemize}  
    We break into cases depending on the type of bag.
  \begin{enumerate}
  \def\labelenumii{\arabic{enumii}.}
  
  \item
    Leaf bag: We map the empty summary to the empty evaluation.
  \item
    Insert bag ($X = Y \cup \{u\}$): For every partial evaluation $\alpha$ remembered at $Y$, we attempt to extend it with $\alpha(u) = 0$ and $\alpha(u) = 1$. If these extensions are valid and acyclic, compute their new summaries and remember them if they have the smallest cost for their summary so far.
  \item
    Forget bag ($X = Y \setminus \{u\}$): For every partial evaluation $\alpha$ remembered at $Y$, restrict the entire summary to $X$ and remember the lowest cost evaluations per summary.
  \item
    Join bag: Denote this bag as $X$ with children $Y$ and $Z$ (even though as sets $X = Y = Z$). By property 2 of tree decompositions, $T_Y$ and $T_Z$ intersect only at $X$. Therefore, for all remembered $\alpha : T_Y \to \{0, 1\}$ and $\beta : T_Z \to \{0, 1\}$, as long as they agree on $X$, they can be merged into a new evaluation on $T_X$. If it is valid and acyclic, compute its summary and remember it if it has the smallest cost for its summary so far.
  \end{enumerate}
\item
  At the root, output the minimum cost evaluation producing the summary corresponding to $\alpha(u_{out}) = 1$.
\end{enumerate}

The discussion within the algorithm gave some ideas to why $G[\alpha]^+[X]$ is necessary in the summary, but it remains to motivate $\text{known}_\alpha$. Consider an OR gate $u$ such that all but one of its children have been forgotten. Without $\text{known}_\alpha$, we might keep only the evaluation where all of its children are set to 0, because that is cheaper, forcing us to pick the last child to continue to this line, even if it is suboptimal. 

The running time of this algorithm is roughly $2^{O(w^2)}\text{poly}(w, n)$, where $w$ is the treewidth and $n = |V|$. The precise coefficients and polynomial degree are dependent on the data structures in the implementation, but the largest term comes from there being at most $2^{O(w^2)}$ distinct summaries for each bag, since the transitive graph has $O(w^2)$ edges. For more details, see \cite{ktx17} or \cite{glp24}.

Note that when implementing this algorithm, instead of actually remembering the best partial evaluation per summary, it is faster to just remember a pointer to the previous summary that produced it. At the end of the algorithm, one can walk back and recover the full evaluation using these pointers. Depending on the precise data structures, another small optimization could be to combine $\alpha|_X$ and $\text{known}_\alpha|_X$ into a ternary-valued collection in the summary, since $\text{known}_\alpha$ is only recorded for true vertices. 

\section{Circuit simplification}\label{circuit-simplification}

\subsection{Rules}\label{rules}

One natural way to improve the speed of our algorithm is to simplify the instances directly. We found in our testing that this helps dramatically. The idea of preprocessing e-graphs to simplify them before extraction is not new (see for example, the repository at \cite{han24}), but the circuit view of e-graphs makes this process easier, more transparent, and more powerful.

Finding the most compact representation of a Boolean circuit is often known as circuit minimization. Circuit minimization is well-known to be NP-hard, but there is an abundance of existing software to quickly attempt a best effort, such as SIS \cite{sis} among others. 

Note that in order to use off-the-shelf software, it must support cycles. Though our particular semantics are unique, all simplifications valid for \emph{sequential circuits} are valid for us, a type of circuits fundamental in hardware. Sequential circuits are cyclic circuits where the ``undefined'' behavior is explicitly specified by propagation delay and keeping track of the circuit's state. A minimizer for sequential circuits will additionally ensure that these stateful behaviors are preserved, which we can just ignore---it must also preserve behaviors independent of state, as acyclic extractions are. Thus, we may want to do some additional simplification afterwards specific to our own extraction semantics, but such software may be used as a first step.

To show some basic ideas, as well as to highlight properties of extraction to make simplifications beyond what is possible from off-the-shelf software, we implemented several of our own heuristic rules. For all of the rules below, let $G = (V, E, V_{out}, g, c)$ be a monotone circuit. These rules are by no means an exhaustive list of all possible simplifications, they are only an exploratory list of the kinds of rules that may be beneficial.

\begin{proposition}
    When applying each of the following rules to a weighted cyclic monotone circuit, the optimal acyclic evaluation is either retained or efficiently recoverable.
\textnormal{
    \begin{enumerate}
\def\labelenumi{\arabic{enumi}.}
\item
  (Remove unreachable) For all $u \in V$, if there does not exist a path from $u$ to some $v \in V_{out}$, then it is safe to remove $u$ from $V$.
\item
  (Contract indegree one) Suppose $u \in V$ has indegree 1, in particular $(v, u) \in E$. Then it is safe to contract the edge $(v, u)$. The new vertex has the same gate type as $v$.
\item
  (Contract same gate) Suppose $v \in V$ has outdegree 1, in particular $(v, u) \in E$, and suppose $g(v) = g(u)$. Then it is safe to contract the edge $(v, u)$. The new vertex has the same gate type as $v$ and $u$.
\item
  (Same gate no shortcut) Suppose $(v, u) \in E$ and there exists a path $(v, w_1), (w_1, w_2), \dots, \allowbreak (w_{n-1}, w_n), (w_n, u) \in E$ such that $g(v) = g(w_1) = \dots = g(w_{n-1}) = g(u)$. Then it is safe to delete $(v, u)$.
\item
  (Factoring) Suppose we have $(w, v_i), (v_i, u) \in E$ where $g(v_i) = \text{OR}$ for $2 \le i \le n$, and $g(u) = \text{AND}$. Then it is safe to delete all of these edges and replace them with two new vertices $a$ and $b$, where $g(a) = \text{AND}$ and $g(b) = \text{OR}$, with the edges $(v_i, a)$ for all $i$, $(a, b)$, $(w, b)$, and $(b, u)$. The rule may also apply with AND and OR swapped.
\item
  (Remove lone OR loops) Suppose $u \in V$ is an OR gate and $v_1, \dots, v_n \in V$ are AND gates, and $(u, v_1), (v_1, v_2), \dots, (v_{n-1}, v_n), (v_n, u) \in E$. Then it is safe to delete $v_n$.
\item
  (Collect variables) Suppose $u_1, u_2 \in V$ are variables with the same out-neighborhood, all of which are AND gates. Then it is safe to merge $u_1$ and $u_2$ into a new variable with the same out-neighborhood, with cost the sum of the originals.  \qedhere
\end{enumerate}
}
\end{proposition}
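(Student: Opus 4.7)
My plan is to handle each of the seven rules as an independent claim, since they are semantically unrelated. For each rule, the goal is to exhibit a cost-preserving correspondence between optimal acyclic satisfying evaluations of the original circuit and those of the simplified circuit, verifying along the way that validity, satisfaction, and acyclicity are all preserved. The rules split naturally into two groups: Rules 1--5 and 7, which are justified by purely Boolean/circuit equivalences and are essentially oblivious to the acyclicity constraint, and Rule 6, which uses acyclicity essentially and is the main obstacle.

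For the first group, the arguments are local. Rule 1 follows from minimality: an unreachable vertex contributes to no output, so it must be $0$ in any minimal satisfying evaluation, and removal is safe. Rules 2 and 3 are justified, respectively, by the identity $g(v)=v$ on a single input and by the associativity of AND/OR. Rule 4 captures the redundancy of an edge already subsumed by a same-gate-type path. Rule 5 is the distributive law: the original computes $\bigwedge_i (w \lor \text{(rest)}_i) \land \text{(rest)} = (w \lor \bigwedge_i \text{(rest)}_i) \land \text{(rest)}$, and the simplified circuit's new internal gates $a$ (AND) and $b$ (OR) implement the right-hand side while adding no input cost. Rule 7 follows from minimality together with the all-AND out-neighborhood: if $u_1=1$ and $u_2=0$ in some optimum, every downstream AND gate would already be $0$ due to $u_2=0$, so $u_1$ could be dropped to $0$ without losing satisfaction, contradicting minimality; hence $\alpha(u_1)=\alpha(u_2)$ in every optimum, and merging with additive cost captures both joint states.

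Rule 6 is where I expect the main difficulty to lie. The key observation is that in any acyclic evaluation $\alpha$ one must have $\alpha(v_n)=0$: if $\alpha(v_n)=1$, then because $v_n$ is an AND gate with $v_{n-1}$ as an input, $\alpha(v_{n-1})=1$, and inductively $\alpha(v_1)=\cdots=\alpha(v_n)=\alpha(u)=1$, producing the cycle $u \to v_1 \to \cdots \to v_n \to u$ in $G[\alpha]$, a contradiction. Given $v_n=0$ in every acyclic evaluation, deleting $v_n$ is semantically equivalent to replacing it by the constant $0$ and then propagating: AND consumers of $v_n$ are already forced to $0$ in the optimum, and OR consumers simply lose a $0$-valued input. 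I would formalize this with an explicit ``replace-and-propagate'' reduction and then argue, using minimality and the unchanged downstream behavior, that the deletion preserves the optimum.

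The final housekeeping, which I would check briefly rule-by-rule, is that no simplification introduces a cycle absent from the original graph and that no edge deletion invalidates a previously valid evaluation; the only rule adding vertices is Rule 5, whose $a$ and $b$ receive edges only from preexisting vertices, so no new cycles arise. Overall the proof is long but mechanical, with Rule 6's interaction with acyclicity being the one genuinely subtle point.
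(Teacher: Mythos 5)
Your proposal is correct and follows essentially the same rule-by-rule strategy as the paper: minimality for Rules 1 and 7, identity/associativity for Rules 2--3, path subsumption for Rule 4, distributivity for Rule 5, and the forced-true-cycle argument showing $\alpha(v_n)=0$ for Rule 6. Your writeup is in fact more detailed than the paper's (whose justification for Rule 2 merely restates the rule), but no step differs in substance.
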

\begin{proof}
    \begin{enumerate}
        \item The optimal acyclic evaluation is minimal, so it does not set true any vertices that do not affect the output. 
        \item Suppose $u \in V$ has indegree 1, in particular $(v, u) \in E$. Then it is safe to contract the edge $(v, u)$. The new vertex has the same gate type as $v$.
        \item Because AND and OR are associative, when two of the same gate are adjacent and the subexpression is not reused in other situations, it is an equivalent circuit to merge the two gates. 
        \item One may check for both AND and OR that after deleting the edge, the dependency is still maintained through the path. 
        \item This rule is nothing more than observing how $(w \lor x_1) \land \dots \land (w \lor x_n) = w \lor (x_1 \land \dots \land x_n)$, generalized slightly.

        \begin{figure}
            \centering
            \includegraphics[scale=0.8667]{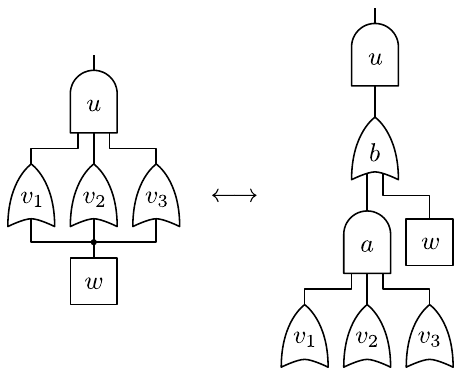}
            \caption{The factoring rule. All gates may have additional inputs/outputs, which are preserved and not depicted here. The square denotes any vertex. }
            \label{fig:factor}
        \end{figure}

        (See \cref{fig:factor} for a visualization of this rule. We note that this is the only rule that may increase the size of the circuit. However, it is generally beneficial to apply this because it reduces the number of cycles in the underlying undirected graph, which generally reduces the treewidth.)
        \item Suppose $v_n$ is true. Then by induction, every $v_i$ is true, as well as $u$. Then, this evaluation has a true cycle. So, the optimal acyclic evaluation must have $v_n$ false, and we can delete it. 
        \item Simply note that an evaluation that sets $u_1$ to true but $u_2$ to false is not minimal, so $u_1$ and $u_2$ must be both false or both true. \qedhere
    \end{enumerate}
\end{proof}


Some of these rules are purely based on the circuit structure (rules 2, 3, 4, 5). These were rules that we arbitrarily decided to implement, but in future work, one might consider replacing with an existing general purpose circuit minimization tool. This is one potential benefit of circuits that we have yet to explore. Other rules are more specific to extraction (rules 1, 6, 7). We note that some rules can be generalized to all extraction algorithms, such as rule 1.

Although there are often equivalent rules that operate on e-graphs directly without translating to circuits, circuits can generally be smaller (not to mention unlocking the ability to harness existing software). Recall that a direct translations of an e-graph will always result in an alternating AND/OR pattern, with every AND gate having outdegree 1 and one variable as an in-neighbor. Rules like rule 2 and rule 7 break this structure, and they are valid because the algorithm works for all monotone circuits.

\subsection{Simplification evaluation}\label{simplification-evaluation}

To evaluate our simplification rules, we applied them to a large set of e-graphs from various sources, collected in the ``extraction-gym'' benchmarking suite \cite{extraction-gym}. These are e-graphs that were generated by real projects, such as a e-graph based general purpose compiler (``eggcc-bril''), a compiler for specialized hardware (``flexc''), a fuzzer for automated testing (``fuzz''), and several other sources. A basic summary of the dataset is given in \cref{tbl:gym}. 

\begin{table}
\centering
\begin{tabular}{c||c|c|c}
\textbf{Source} & \textbf{no. of e-graphs} & \textbf{avg. $|V|$} & \textbf{avg. degree} \\ \hline \hline
babble          & 173                      & 5336.8              & 2.6                  \\ \hline
egg             & 28                       & 4276.5              & 4.1                  \\ \hline
eggcc-bril      & 36                       & 20329.5             & 2.7                  \\ \hline
flexc           & 14                       & 23620.7             & 3.4                  \\ \hline
fuzz            & 18                       & 126.0               & 4.0                  \\ \hline
rover           & 9                        & 21303.4             & 6.8                  \\ \hline
tensat          & 10                       & 57969.9             & 3.3                 
\end{tabular}
\caption{Basic characteristics of ``extraction gym'' dataset after circuit conversion.}
\label{tbl:gym}
\end{table}

Note that the main algorithm's correctness only relies on having a valid tree decomposition of any width---a larger width only affects the running time. Thus, although we unfortunately found that the vast majority of e-graphs in this test set were too large for computing exact tree decomposition, approximate tree decomposition algorithms suffice. We used the ``arboretum'' Rust library, which implements various heuristics but primarily relies on the classical minimum degree heuristic to compute an upper bound on treewidth. 

In order to compare the effects of our simplification scheme, such an upper bound is also a more realistic measure to compare than the true treewidth, since the important quantity is the width of the tree decomposition available to our algorithm. In our implementation, we apply each of the rules in a loop until we reach a fixed point. The effect of simplification on treewidth and $|V|$ is shown in \cref{fig:simp}, and these effects are quantified in \cref{tbl:simp}.

Note that every source produces e-graphs in a different way, which can dramatically affect which optimizations are more effective. E-graphs from some sources, like ``eggcc-bril'', demonstrated extraordinary simplification with $65\%$ reduction in treewidth and $97\%$ reduction in $|V|$, whereas e-graphs from other sources like ``babble'' or ``flexc'' demonstrated negligible improvement in treewidth (or even slight degradation due to the non-exact tree decomposition algorithm), although $|V|$ continues to be reduced substantially, by $60\%$ or more in all but one collection. 

We note that although the treewidth of most e-graphs in our dataset is small, sparsity is likely to be the only contributing factor to low treewidth in these data, not any deeper features of the e-graph generation process. This is because the linear relationship between $|V|$ and treewidth is the exact relationship predicted by random graphs of constant average degree: In a random graph where each edge has probability $c/n$ of appearing for $c > 1$, the treewidth of the graph is $\Omega(n)$ \cite{llo11} and upper bounded by $tn$ for some $t < 1$ \cite{wlcx11}.

\begin{figure}
    \centering
    \includegraphics[scale=0.8]{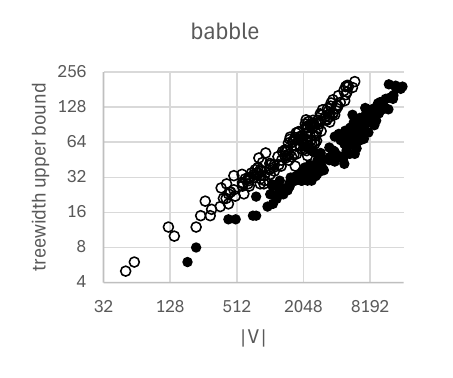}
    \includegraphics[scale=0.8]{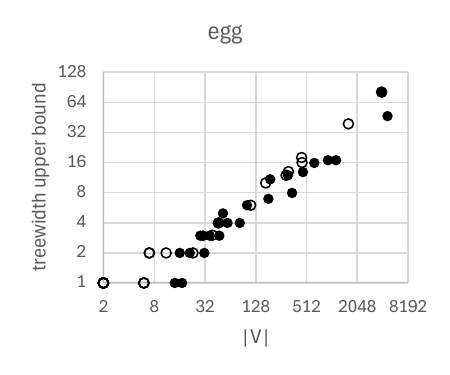}
    \includegraphics[scale=0.8]{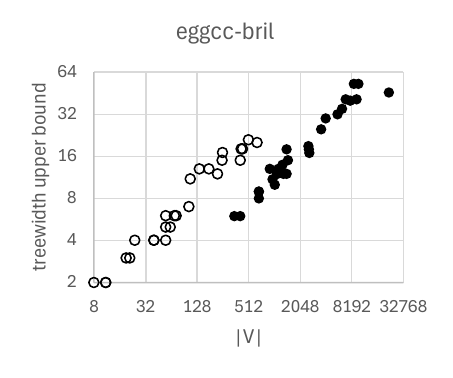}
    \includegraphics[scale=0.8]{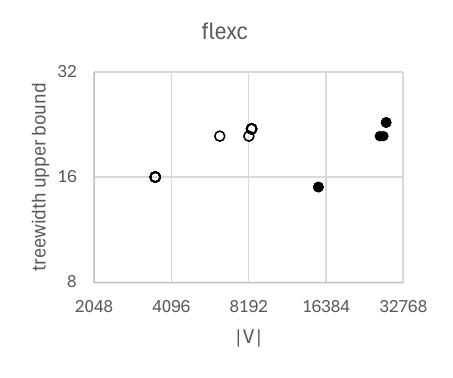}
    \includegraphics[scale=0.8]{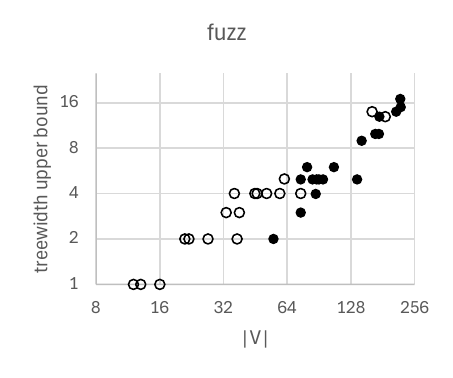}
    \includegraphics[scale=0.8]{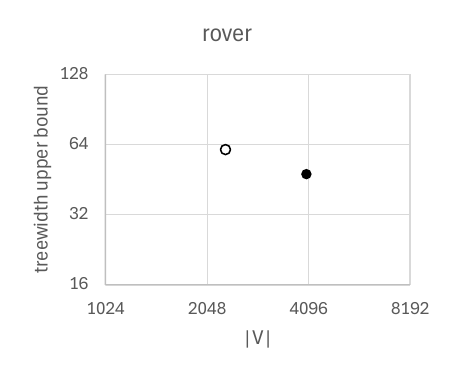}
    \includegraphics[scale=0.8]{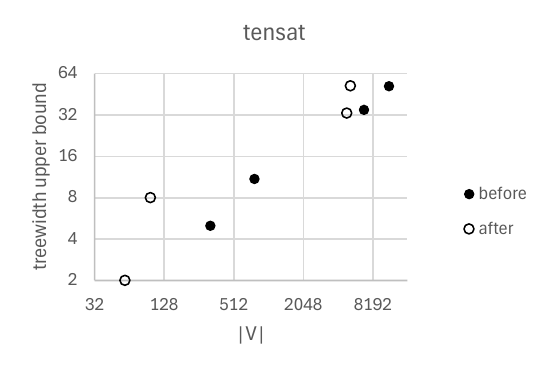}
    \caption{Treewidth and $|V|$ before and after applying all simplification rules. Note that there are some artifacts due to heuristics used in tree decomposition, e.g. in ``rover''. A few examples continued to time out after 15 seconds and are omitted from these data.}
    \label{fig:simp}
\end{figure}

\begin{table}
\centering
\begin{tabular}{c||c|c|c|c}
\textbf{Source} & \textbf{avg.    $\Delta$ treewidth} & \textbf{avg. $\Delta |V|$} & \textbf{avg. $\Delta |E|$} & \textbf{\% timeout} \\ \hline \hline
babble          & -5\%                                & -64\%                      & -57\%                      & 1\%                 \\ \hline
egg             & -40\%                               & -72\%                      & -80\%                      & 7\%                 \\ \hline
eggcc-bril      & -65\%                               & -97\%                      & -96\%                      & 17\%                \\ \hline
flexc           & 1\%                                 & -74\%                      & -81\%                      & 7\%                 \\ \hline
fuzz            & -46\%                               & -60\%                      & -73\%                      & 0\%                 \\ \hline
rover           & 27\%                                & -42\%                      & -72\%                      & 89\%                \\ \hline
tensat          & -23\%                               & -63\%                      & -64\%                      & 60\%               
\end{tabular}
    \caption{Results of simplification, using heuristic approximate tree decomposition, so some instances may falsely confuse an actual rise in treewidth for an instance that just happens to be more difficult for the heuristic. A few examples continued to time out after 15 seconds and are omitted from these data.}
    \label{tbl:simp}
\end{table}

Lastly, we note that these simplification methods bring many more e-graphs into the range in which treewidth-based methods are faster than existing methods. Though we did not develop an efficient implementation ourselves, \cite{glp24} claims that their implementation of treewidth-based extraction is faster than ILP solvers for most e-graphs with treewidth under 10, at least when not required to output acyclic extractions. Few e-graphs in our test set had treewidth under 10 before simplification, but especially for the egraphs in the collections ``eggcc-bril'' and ``fuzz'', a significant fraction of them have treewidth under 10 after simplification.


\section{Future directions}

Many open questions remain regarding extraction and treewidth. The largest open question in our mind relates to more general cost functions than the simple additive ones we have considered here. This is the one area where ILP solvers can never work---by nature of being integer \emph{linear} programs, they are not suitable for other cost functions.

In general, every e-node can be associated with a local cost function, which depends on the costs of its children. For example, an e-node representing the operation ``do my child again'' could have cost $c(x) = x$, in other words it would copy the cost of its child. Such cost functions present a challenge to this present algorithm because we do not enforce the children to be discovered before the parent. In other words, we have to decide whether or not ``do my child again'' is cheap, without knowing the cost of the child, in order to decide whether or not to keep that partial evaluation.

One potential solution would be to represent costs abstractly, and keep all minimal cost solutions, not just one minimum cost one. In this example, we would define a symbolic variable $x$ denoting the unknown cost, and set the cost of ``do my child again'' to $x$. If an evaluation has cost $x$ and another evaluation with the same summary has, for example, cost 7, we would keep both, and then evaluations with cost $2x$, $x+5$, or 8 could be thrown away. When the vertex associated with $x$ is inserted, these expressions would update based on the cost of $x$ that we now know. However, this would massively increase the running time of the algorithm, and would generally not be polynomial with fixed treewidth. It is an interesting open direction to hand such general cost functions efficiently.

We note that \cite{glp24} analyzed the main algorithm (that we share) closely, and identified some criteria slightly more general than additive cost functions that actually work automatically. However, those criteria still exclude nodes like ``do my child again'', so there remains work to be done.

A second open direction is to investigate more closely if particular methods of e-graph creation lead to smaller treewidth, and then design e-graph creation methods (or \emph{saturation}, as it is often called) that minimize treewidth from the start. Since treewidth is often calculated with heuristic algorithms, it would be interesting to determine if certain saturation methods could even be designed with a particular treewidth heuristic in mind to further improve efficiency.


\appendix
\section{Proof of \cref{prop:equiv}} \label{sec:appendix}
\begin{proposition}
    Let $\mathcal{G} = (N, \mathcal{C}, \mathcal{E}, \mathcal{C}_{out}, c)$ be an e-graph and define $G = (V, E, V_{out}, g, c)$ as follows:
    \begin{enumerate}
        \item Let $V = \{x_u : u \in N \} \cup \{\land_u : u \in N\} \cup \{\lor_C : C \in \mathcal{C}\}$.
        \item Let 
        \begin{equation*}
            E = \{(\land_u, \lor_C) : u \in C \in \mathcal{C}\} \cup \{(\lor_C, \land_u) : (u, C) \in \mathcal{E}\} \cup \{(x_u, \land_u) : u \in N\} \tag{$\star$}
        \end{equation*}
        
        \item Let $V_{out} = \{\lor_C : C \in \mathcal{C}_{out}\}$.
        \item Let $g$ map each $\land_u$ to AND and each $\lor_C$ to OR.
        \item Let $c(x_u) = c(u)$.  
    \end{enumerate}

Then the following map is a bijection between minimal satisfiable extractions of $\mathcal{G}$ and minimal satisfiable evaluations of $G$. We map an extraction $\varphi$ to the function $\alpha$ defined for all $u \in C \in \mathcal{C}$:
\begin{equation*}
    \alpha(x_u) = \alpha(\land_u) = \mathbf{1}(\varphi(C) = u) \qquad \qquad \alpha(\lor_C) = \mathbf{1}(C \in \dom(\varphi)) \tag{$\dagger$}
\end{equation*}
The bijection also preserves acyclicity and cost.
\end{proposition}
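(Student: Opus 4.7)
The plan is to verify that $(\dagger)$ sends minimal satisfying extractions to minimal satisfying evaluations, exhibit an explicit inverse, and then verify the two preservation properties. All parts are direct verifications from the definitions; the minimality hypothesis is what forces the many-valued OR gate $\lor_C$ to act like a single-valued choice function.

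For the forward direction, given a minimal satisfying extraction $\varphi$ and the corresponding $\alpha$, I would check validity by taking cases on each of the three vertex types in $V$ using the edge set $(\star)$: inputs $x_u$ carry no constraint; the AND gate $\land_u$ has $\alpha$-value $\mathbf{1}(\varphi(C)=u)$, which equals the AND of $\alpha(x_u)$ and all $\alpha(\lor_{C'})$ for $(u,C')\in\mathcal{E}$ because $\varphi(C)=u$ forces every child $C'$ of $u$ into $\dom(\varphi)$ by the extraction axiom; and the OR gate $\lor_C$ has $\alpha$-value $\mathbf{1}(C\in\dom(\varphi))$, which equals the OR over $u\in C$ of $\mathbf{1}(\varphi(C)=u)$. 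Satisfaction is immediate from $\mathcal{C}_{out}\subseteq\dom(\varphi)$. For minimality, I would show that any $A\subsetneq\{v:\alpha(v)=1\}$ with $\alpha|_A$ valid must drop some $\lor_C$: if $A$ instead only dropped $\land_{\varphi(C)}$ or $x_{\varphi(C)}$, then the OR gate $\lor_C$ would still evaluate to $1$ under $\alpha|_A$ while all of its inputs are $0$, breaking validity. Once $\lor_C$ is dropped, restricting $\varphi$ to $\{C':\lor_{C'}\in A\}$ yields a strictly smaller satisfying extraction, contradicting minimality of $\varphi$.

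For the inverse direction, given a minimal satisfying $\alpha$, set $\dom(\varphi)=\{C:\alpha(\lor_C)=1\}$ and for each such $C$ choose $\varphi(C)$ to be any $u\in C$ with $\alpha(\land_u)=1$ (which exists because $\lor_C$ is an OR). Uniqueness, and hence well-definedness as a choice function, is the key minimality step: if both $\land_{u_1}$ and $\land_{u_2}$ were $1$ for distinct $u_1,u_2\in C$, then since $\land_{u_1}$ has its unique outgoing edge going to $\lor_C$ and $x_{u_1}$ has its unique outgoing edge to $\land_{u_1}$, removing the pair $\{\land_{u_1},x_{u_1}\}$ from the support breaks no validity constraint elsewhere and leaves $\lor_C$ still supported by $\land_{u_2}$, contradicting minimality of $\alpha$. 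The remaining checks---that $\varphi$ is an extraction (children lie in $\dom(\varphi)$ by validity at $\land_{\varphi(C)}$), satisfying, and minimal, and that the two constructions compose to the identity---are then routine.

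Finally, cost is preserved because $(\dagger)$ makes $\alpha(x_u)=1$ precisely when $u$ is in the image of $\varphi$, and $c(x_u)=c(u)$. For acyclicity, I would put selected paths $C_1,\dots,C_k$ in $\varphi$ into bijection with the circuit paths $\lor_{C_k} \to \land_{\varphi(C_{k-1})} \to \lor_{C_{k-1}} \to \cdots \to \land_{\varphi(C_1)} \to \lor_{C_1}$ in $G[\alpha]$, observing that in $G[\alpha]$ the unique outgoing edge of a true $\land_u$ goes to $\lor_C$ for the one $C\ni u$, while the outgoing edges of $\lor_C$ go to $\land_u$ with $\alpha(\land_u)=1$, i.e.\ $u=\varphi(C')$ for the class $C'$ containing $u$; a cycle in one graph thus corresponds to a cycle in the other. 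The main obstacle throughout is the minimality book-keeping, where one must verify that $\alpha|_A$ remains an \emph{actually valid} evaluation rather than merely a smaller one---it is the outdegree-one structure of $\land_u$ and $x_u$ that makes the targeted removals work.
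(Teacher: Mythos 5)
Your proposal is correct and follows essentially the same route as the paper's proof: the same validity case analysis, the same use of minimality to force at most one true $\land_u$ per class (via removing $\{\land_{u_1}, x_{u_1}\}$), the same translation of a smaller valid $\alpha|_A$ into a smaller satisfying extraction (and vice versa), and the same path correspondence for acyclicity. The steps you label ``routine''---notably verifying that the restricted evaluations remain valid when transferring minimality from $\alpha$ to $\varphi$---are exactly where the paper spends its case analysis, but you have correctly identified that this is where the work lies and why the outdegree-one structure makes it go through.
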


\begin{proof}
First, we need to show that $\alpha$ is a minimal satisfying evaluation.
\begin{itemize}
    \item To show that $\alpha$ is valid, we check the gate functions using ($\star$) and ($\dagger$).
    \begin{enumerate}
        \item If $\alpha(\land_u) = 0$, then $x_u$ is an input and $\alpha(x_u) = 0$, so we are good.
        \item If $\alpha(\land_u) = 1$, then $\alpha(x_u) = 1$, and the other inputs are $\lor_C$ for all $(u, C) \in \mathcal{E}$. By definition of extraction, $C \in \dom(\varphi)$, so $\alpha(\lor_C) = 1$ as required.
        \item If $\alpha(\lor_C) = 0$, then $C \not \in \dom(\varphi)$, so for all $u \in C$, we have $\alpha(\land_u) = 0$ as required.
        \item If $\alpha(\lor_C) = 1$, then $C \in \dom(\varphi)$ and $\varphi(C) = u$ for some $u \in C$. Hence $\alpha(\land_u) = 1$ as required.
    \end{enumerate}
    
    \item To show that $\alpha$ is minimally satisfying, it is certainly satisfying from the definitions of $\alpha$ and $V_{out}$, so it remains to show minimality. Suppose for contradiction that there exists a proper subset $A \subsetneq \{ i \in V : \alpha(i) = 1\}$ such that $\alpha|_{A}$ is a satisfying evaluation. 
    
    Let $\mathcal{A} = \{C \in \mathcal{C} : \alpha|_A(\lor_C) = 1\}$. We will show that $\mathcal{A} \subsetneq \dom(\varphi)$ yet $\varphi|_{\mathcal{A}}$ is satisfying, contradicting the fact that $\varphi$ is minimal satisfying. 
    
    First, note that $\mathcal{A}$ is a subset of $\dom(\varphi)$ because $\alpha|_A(\lor_C) = 1$ implies $C \in \dom(\varphi)$ by ($\dagger$). Next, we show that it is a proper subset. Because $A$ is a proper subset of $\{i \in V : \alpha(i) = 1\}$, there is a true vertex (by $\alpha$) outside of $A$. 
    \begin{enumerate}
        \item If it is $\lor_C$, then $\alpha|_A(\lor_C) = 1$ and $C$ is a satisfied class outside of $\mathcal{A}$.
        \item If it is $\land_u$, by our construction ($\dagger$), no siblings of $\land_u$ are true. So if $\land_u \not \in A$ and $C$ denotes the class containing $u$, by validity of $\alpha|_A$, we have $\alpha|_A(\lor_C) = 0$, and $C$ is a satisfied class outside of $\mathcal{A}$. 
        \item If it is $x_u$, by validity of $\alpha|_A$, the vertex $\land_u$ must be false in $\alpha|_A$, and $\land_u$ must be true in $\alpha$ by ($\dagger$). Therefore, we are in case (2) and can repeat the argument to find the desired $C$.
    \end{enumerate}

Now to show that $\varphi|_\mathcal{A}$ is a satisfying extraction. First, to show that it is a valid extraction, if $\varphi(C)$ depends on $C_1, \dots, C_k$, by validity of $\alpha|_A$ it must be that $\alpha|_A(\lor_{C_i}) = 1$ for each $i$, so $C_i \in \mathcal{A}$ as required for an extraction. The extraction is satisfying because $\alpha|_A$ being satisfying implies $\alpha|_A(\lor_C) = 1$ for all $C \in \mathcal{C}_{out}$, and hence $C \in \mathcal{A}$ as required. 
\end{itemize}

To prove that the map is a bijection, we define the inverse. Given a minimal satisfying total evaluation $\alpha$, let $\varphi(C) = u$ if and only if $\alpha(\land_u) = 1$ for some $u \in C$. The inverse map is a well-defined choice function because for every $C$, there exists at most one $u \in C$ such that $\alpha(\land_u) = 1$. This follows from minimality of $\alpha$: if $\alpha(\land_u) = \alpha(\land_v) = 1$ for $u, v \in C$, then taking $A = V \setminus \{\land_u, x_u\}$ would allow $\alpha|_{A}$ to be a satisfying evaluation, noting that the only output of $\land_u$ is $\lor_C$. Now we need to show that $\varphi$ is a minimal satisfying extraction.

\begin{itemize}
    \item To show that $\varphi$ is an extraction, simply note that whenever $\varphi(C)$ is defined, by the above construction $\alpha(\land_{\varphi(C)}) = 1$, so by validity of $\alpha$ and ($\star$), all dependencies $C_i$ must have $\alpha(\lor_{C_i}) = 1$. Again by validity, this means that for each $C_i$, at least one $u_i \in C_i$ must have $\alpha(\land_{u_i}) = 1$, so $C_i \in \dom(\varphi)$ and we are done.
    \item To show that $\varphi$ is satisfying, simply note that because $\alpha$ is satisfying, $\alpha(\lor_C) = 1$ for all $C \in \mathcal{C}_{out}$. Each $\lor_C$ can only be 1 if at least one of its children $\land_u$ is evaluated to 1 where $u \in C$, so $C \in \dom(\varphi)$ and we are done.

    To show minimality, suppose for contradiction that there exists a proper subset $\mathcal{A} \subsetneq \dom(\varphi)$ such that $\varphi|_\mathcal{A}$ is a satisfying extraction. Consider $A = \{x_u, \land_u, \lor_C : \varphi|_{\mathcal{A}}(C) = u\}$. We will show that $\alpha|_A$ is a satisfying evaluation with $A \subsetneq V$. 

    To show that $\alpha|_A$ is a valid evaluation, we need to check the gate functions.
    \begin{enumerate}
        \item If $\alpha|_A(\land_u) = 0$, then either $\alpha(\land_u) = 0$, in which case this is valid by validity of $\alpha$, or the class of $u$ did not belong to $\mathcal{A}$, in which case we also have $\alpha|_A(x_u) = 0$, which suffices for validity.
        \item If $\alpha|_A(\land_u) = 1$, then where $C$ is the class of $u$, we have $C \in \mathcal{A}$. Because $\varphi|_{\mathcal{A}}$ an extraction, we conclude that the children $C_1, \dots, C_k$ of $u$ also belong to $\mathcal{A}$, so combined with the fact that $\alpha$ is valid, we conclude that $\alpha|_A(\lor_{C_i}) = 1$ as desired. Again because $\alpha$ is valid and $\land_u \in A$ if and only if $x_u \in A$, we also have $\alpha|_A(x_u) = 1$ to conclude.
        \item If $\alpha|_A(\lor_C) = 0$, then either $\alpha(\lor_C) = 0$, in which this is valid by validity of $\alpha$, or $C \not \in \mathcal{A}$, in which case there is no $u \in C$ for which $\land_u \in A$. Hence $\alpha|_A(\land_u) = 0$ for all $u \in C$, which suffices for validity.
        \item If $\alpha|_A(\lor_C) = 1$, then we have $C \in \mathcal{A}$. Then where $u = \varphi|_\mathcal{A}(C) = \varphi(C)$, we have $\land_u \in A$, and hence $\alpha|_A(\land_u) = \alpha(\land_u) = 1$, which suffices for validity.
    \end{enumerate}
    
    As mentioned above, $\alpha|_A$ is satisfying because $\alpha$ is satisfying and $\lor_C \in A$ for all $C \in \mathcal{C}_{out}$, because $\varphi|_\mathcal{A}$ is satisfying. Then $A \subsetneq V$ because if $C \in \dom(\varphi) \setminus \mathcal{A}$, then $\alpha(\lor_C) = 1$ whereas $\alpha|_{A} (\lor_C) = 0$. 
\end{itemize}

Lastly, to show that inverse map is truly an inverse, we need to show that transforming $\varphi$ to $\alpha$ to $\varphi$ is the identity, which is obvious, and that transforming $\alpha$ to $\varphi$ to $\alpha$ is identity, for which it suffices to note that $\alpha$ is entirely determined by its values on $\land_u$: vertices $\lor_C$ only have vertices of type $\land_u$ as inputs, so they are determined, and we must have $\alpha(x_u) = \alpha(\land_u)$, because $\alpha(x_u) = 0$ with $\alpha(\land_u) = 1$ is not valid and $\alpha(x_u) = 1$ with $\alpha(\land_u) = 0$ is not minimal (take $A = V \setminus \{x_u\}$).

To show that the bijection preserves acyclicity, we need to show both directions:
\begin{itemize}
    \item Suppose $\varphi$ is acyclic. To show that $G[\alpha]$ is acyclic, it suffices to show that every directed cycle in $G$ has at least one vertex on which $\alpha$ is 0. The only possible directed cycles of $G$ occur as alternations of edges of type $(\land_u, \lor_C)$ and $(\lor_C, \land_u)$. So let $\land_{u_1}, \lor_{C_1}, \dots, \land_{u_k}, \lor_{C_k}, \land_{u_{k+1}} = \land_{u_1}$ be a cycle in $G$, where $u_i \in C_i$ and $(u_{i+1}, C_i) \in \mathcal{E}$ for all $i$, and it would suffice to find $u_i$ such that $\alpha(\land_{u_i}) = 0$. Because $\varphi$ is acyclic, there must be some $u_i$ for which $\varphi(C_i) \neq u_i$ (otherwise $C_1, \dots, C_k, C_1$ is a cycle). Therefore by ($\dagger$), $\alpha(\land_{u_{i}}) = 0$ as desired.  
    \item Suppose $\alpha$ is acyclic and suppose for contradiction that $C_1, \dots, C_k = C_1$ forms a cycle in $\varphi$. Then for each $i$, denoting $u_i = \varphi(C_i)$, we have that $\alpha(\land_{u_i}) = 1$ and hence $\alpha(\lor_{C_i}) = 1$. But then $\land_{u_1}, \lor_{C_1}, \dots, \land_{u_{k-1}}, \lor_{C_{k-1}}, \land_{u_1} = \land_{u_k}$ is a cycle in $G[\alpha]$, a contradiction. 
\end{itemize}

Lastly, for the cost, we simply note that by the bijection, $x_u = 1$ if and only if $\varphi(C) = u$ where $C$ is the class of $u$, so this is clear.
\end{proof}

\bibliographystyle{alpha} 
\bibliography{bib}

\end{document}